\documentclass{llncs}

\newenvironment{proofofclaim}{\noindent {\em Proof.}}{\hfill $\diamond$ \medbreak}

\newcommand{\local}{{\cal LOCAL}}
\newcommand{\inp}{\mbox{\rm\bf x}}
\newcommand{\certif}{\mbox{\rm\bf y}}
\newcommand{\id}{\mbox{\rm Id}}
\newcommand{\LD}{\mbox{\rm LD}}
\newcommand{\NLD}{\mbox{\rm NLD}}
\newcommand{\coloring}{\mbox{\tt Coloring}}
\newcommand{\mis}{\mbox{\tt MIS}}
\newcommand{\up}{\mbox{\bf N}}

\def\cF{{\cal F}}
\def\cL{{\cal L}}

\begin{document}

\title{On the Impact of Identifiers on Local Decision\thanks{This work is supported by the \emph{Jules Verne}  Franco-Icelandic bilateral scientific framework.}}

\date{}

\author{
Pierre Fraigniaud\inst{1}\thanks{
E-mail: {\tt \{pierre.fraigniaud,amos.korman\}@liafa.univ-paris-diderot.fr}.
Additional support from  ANR project DISPLEXITY, and INRIA project GANG.}
\and Magn\'us M. Halld\'orsson\inst{2}\thanks{
Supported by Iceland Research Foundation grant-of-excellence 90032021.
E-mail: {\tt mmh@ru.is}.}
\and Amos Korman$^{**}$
}

\institute{
CNRS and University Paris Diderot, France. 
\medbreak
\and
ICE-TCS, School of Computer Science, Reykjavik University, Iceland.  \\
}

\maketitle

\begin{abstract}
The issue of identifiers is crucial in distributed
computing. Informally, 
identities are used for tackling two of the fundamental difficulties that are
inherent to deterministic
distributed computing, namely: (1) {\em symmetry
  breaking}, and (2) {\em topological information gathering}. In the
context of \emph{local computation}, i.e., when nodes can gather
information only from nodes at bounded distances, some insight
regarding the role of identities has been established. For instance,
it was shown that, for large classes of \emph{construction} problems,
the role of the identities can be rather small. However, for the
identities to play no role, some other kinds of mechanisms for
breaking symmetry must be employed, 
such as 
edge-labeling or sense
of direction. When it comes to local distributed \emph{decision}
problems, the specification of the decision task does not seem to
involve symmetry breaking. Therefore, it is expected that, assuming
nodes can gather sufficient information about their neighborhood, one
could get rid of the identities, without employing extra mechanisms
for breaking symmetry. We tackle this question in the framework of the
$\local$ model.

Let $\LD$ be the class of all problems that can be \emph{decided} in a constant number of rounds in the $\local$ model. Similarly, let  $\LD^*$ be the class of all problems that can be decided at constant cost in the anonymous variant of the $\local$ model, in which nodes have no identities, but each node can get access to the (anonymous) ball of radius $t$ around it, for any $t$, at a cost of $t$. It is clear that $\LD^*\subseteq \LD$. We conjecture that $\LD^*=\LD$. In this paper, we give several evidences supporting this conjecture. In particular, we show that it holds for \emph{hereditary} problems, as well as when the
nodes know an arbitrary upper bound on the total number of nodes.  Moreover, we prove that the conjecture holds in the context of \emph{non-deterministic} local decision, where nodes are given certificates (independent of the identities, if they exist), and the decision consists in verifying these certificates. In short, we prove that $\NLD^*=\NLD$. 

\medskip

\noindent{\bf Keywords:} Distributed complexity; locality; identities; decision problems; symmetry breaking;  non-determinism.

\end{abstract}

\section{Introduction}

\subsection{Background and Motivation}

The issue of identifiers is crucial in distributed computing~\cite{A80,NS93}. Indeed, the  correct operation of deterministic protocols often relies on the assumption that each processor $u$ comes with with a unique \emph{identity}, $\id(u)$~\cite{GHS}. Informally, in network computing, such an identity assignment is crucial for tackling two of the fundamental difficulties that are inherent to  distributed computing, namely: (1) {\em symmetry breaking}, and  (2)  {\em topological information gathering}. 

The use of identities for tackling the above two difficulties is illustrated well in the context of \emph{local} algorithms \cite{L92,L86}. Indeed, in the $\cal{LOCAL}$ model \cite{PelB00}, an algorithm that runs in $t$ communication rounds, assuming an identity assignment, can be viewed as composed of two parts: first, collecting at each node $u$, the ball $B(u,t)$ of radius $t$ around it (together with the inputs of nodes), and second, deciding the output at $u$   based solely on the information in $B(u,t)$. To achieve these two tasks, one should first obtain the ball $B(u,t)$, which may not be possible if the underlying graph is anonymous (i.e., without identities). Moreover, even if obtaining the ball is possible, e.g., if the structure of the graph allows it, the absence of unique identities  given to the nodes may prevent the algorithm from breaking symmetry. For example, in the absence of unique identities, it is impossible to design a distributed deterministic coloring algorithm, even for the symmetric connected graph composed of two nodes only.  In fact, to the best of our knowledge, all algorithms in the $\cal{LOCAL}$ model are designed assuming the presence of pairwise distinct identities or some other type of node-labeling or edge-labeling, including, e.g., sense of direction~\cite{BM09,HKP01,K09,LPR09,NS93,PS96}. 

The seminal paper of Naor and Stockmeyer \cite{NS93} provides an important insight regarding the role of identities in local computation. Informally, they show that, even though identities are necessary,  in many cases the actual values of identities is not crucial, and only their relative order matters. Specifically,  \cite{NS93} shows that for a particular class of problems, called LCL (for \emph{Locally Checkable Languages}), if there exists a local algorithm that, for any identity assignment, constructs an instance of a problem in LCL in constant number of rounds,  then there exists an \emph{order invariant}\footnote{Essentially, an order invariant algorithm uses the actual values of the identities only to impose an ordering between the nodes, that is, it behaves the same for any two identity assignments that preserve the total order between the nodes. For more details refer to \cite{NS93}.} algorithm  for that problem that  runs in the same number of rounds. LCL restricts its concern to graphs with constant maximum degree, and to problems with a constant number of inputs. The assumption on the size of the inputs of problems in LCL was shown necessary in \cite{HHRS12}, by exhibiting a natural problem that is locally checkable, has unbounded input size,  can be solved in 1 round with identities, but cannot be solved in constant time by any order invariant algorithm. The role of identities can also be gauged by comparing their impact to that of ``orientation mechanisms''.  For instance, G\"o\"os et al.~\cite{GHS12} have shown that for a large class of optimization problems, called PO-checkable problems, local algorithms do not benefit from any kind of identifiers: if a PO-checkable optimization problem can be approximated with a local algorithm, the same approximation factor can be achieved in anonymous networks if the network is provided with a port-numbering and an orientation.  

The discussion above involved distributed
\emph{construction} tasks, including, e.g., graph
coloring~\cite{BM09,K09,L92,NS93,PS96}, maximal independent
set~\cite{L92,PS96}, and maximal matching~\cite{HKP01,LPR09}. 
When it
comes to distributed \emph{decision} tasks~\cite{FKP11,FKPP12},
symmetry breaking issues do not however seem to play a
role. Informally, a decision task requires the nodes to ``collectively
decide'' whether the given instance (i.e., a graph with inputs to the
nodes) satisfies some specific properties. For instance, deciding
coloring requires, given a colored graph, to check whether this graph
is properly colored. The meaning of ``collectively decide'' is as
follows. On a legal instance, all nodes should output ``yes'', and on
an illegal one, at least one node should output ``no''. Note that it
is not really important whether this node is unique or not; hence, this
specification does not inherently require any symmetry
breaking. Therefore, assuming that each node $u$ can obtain the ball
$B(u,t)$, it makes sense that the assumption of having an identity
assignment may not be crucial for achieving correct decision.

\subsection{Model and Objectives}

We tackle the question of whether identities play a role in decision problems in the framework of the aforementioned $\local$ model~\cite{PelB00}, which is a standard distributed computing model capturing the essence of locality. Recall that, in this model, processors are nodes of a connected network $G=(V(G),E(G))$, have pairwise distinct identities, and have inputs. More formally, a {\em configuration} is a triplet $(G,\inp,\id)$ where $G$ is a connected graph, every node $v\in V(G)$ is assigned as its {\em local input} a binary string $\inp(v)\in \{0,1\}^*$, and $\id(v)$ denotes the identity of node $v$. (In some problems, the local input of every node is empty, i.e., $\inp(v)=\epsilon$ for every $v\in V(G)$, where $\epsilon$ denotes the empty binary string). Processors are woken up simultaneously, and computation proceeds over the input configuration $(G,\inp,\id)$ in fault-free synchronous \emph{rounds} during which every processor exchanges messages of unlimited size with its neighbors in the underlying network $G$, and performs arbitrary individual computations on its data. In many cases, the running time of an algorithm is measured with respect to the size $n$ of $G$: the running time of an algorithm is defined as the maximum number of rounds it takes to terminate at all nodes, over all possible $n$-node networks. Similarly to \cite{HHRS12,NS93}, we consider algorithms whose running time is independent of the size of the network, that is they run in constant time. 

Let $B(u,t)$ be the ball centered at $u$, of radius $t$, excluding the edges between two nodes at distance exactly $t$ from $u$. As mentioned before, without loss of generality, any algorithm running in time~$t=O(1)$ in the $\local$ model consists of:
\begin{enumerate}
\item  Collecting (in $t$ rounds) at every node $u$ the structure of the ball $B(u,t)$  together with all the inputs $\inp(v)$ and identities $\id(v)$ of these nodes, and,
\item Performing some individual computation at every node. 
\end{enumerate}
We define the \emph{anonymous} $\local$  model similarly to the $\local$ model, except that nodes have no identities. More precisely, an input configuration in the anonymous $\local$ model is just a pair $(G,\inp)$. An algorithm running in time~$t=O(1)$ in the anonymous $\local$ model consists of:
\begin{enumerate}
\item Getting at every node $u$ a snapshot of the structure of the ball $B(u,t)$ together with all the inputs of the nodes in this ball, and,
\item Performing some individual computation at every node. 
\end{enumerate}
Note that the \emph{anonymous} $\local$ model does not explicitly involve communications between nodes. Instead, it implicitly assumes that the underlying network supports the snapshot operation.
Clearly, this model is not stronger than the $\local$  model, and possibly even strictly weaker, since a node $u$ can no longer base its individual computation on the identities of the nodes in the ball $B(u,t)$. One can think of various other ``anonymous'' models, i.e., which do not involve node identities. In particular, there is a large literature on distributed computing in networks without node identities, where symmetry breaking is enabled thanks to locally disjoint port numbers (see, e.g., \cite{FP11}). We consider the anonymous $\local$ model to isolate the role of node identities from other symmetry breaking mechanisms.\footnote{In some sense, the anonymous $\local$ model is the strongest model among all models without node identities. Indeed, there are network problems that can be solved in the  anonymous $\local$ model  which cannot be solved in the aforementioned model that is based on locally disjoint port numbers. A simple example is to locally detect the absence of a 3-node cycle.} Our aim is to compare the power of the anonymous $\local$ model with the standard $\local$ model in order to capture the impact of identities on local distributed decision. 

Recall from~\cite{FKP11} that a {\em distributed language} is a decidable collection $\cL$ of configurations. (Since an undecidable collection of configurations remains undecidable in the distributed setting too, we consider only decidable collections of configurations). A typical example of a language is 
\[
\coloring= \{(G,\inp) \;|\; \forall v\in V(G), \forall w\in N(v), \inp(v) \neq \inp(w)\}\ ,
\]
where $N(v)$ denotes the (open) neighborhood of $v$, that is, all nodes at distance exactly~1 from $v$. Still following the terminology from~\cite{FKP11}, we say that a distributed algorithm $A$ \emph{decides} a distributed language $\cL$ if and only if for every configuration $(G,\inp)$, every node of $G$ eventually terminates and outputs ``yes'' or ``no'', satisfying the following decision rules: 
\begin{itemize}
\item 
if $(G,\inp)\in \cL$, then each node outputs ``yes'';
\item 
if $(G,\inp)\notin \cL$, then at least one node outputs ``no''.
\end{itemize}
In the (non-anonymous) $\local$ model, these two rules must be satisfied for every identity assignment. That is, all processes must output ``yes'' on a legal instance, independent of their identities. And, on an illegal instance, at least one node must output ``no'', for every identity assignment. Note that this node may potentially differ according to the identity assignment. Some languages can be decided in constant time  (e.g., $\coloring$), while others can easily be shown not to be decidable in constant time (e.g., ``is  the network planar?''). In contrast to the above examples, there are some languages whose status is unclear. To elaborate on this, consider the particular case where it is required to decide whether the network belongs to some specified family $\cF$ of graphs. If this question can be decided in a constant number of communication rounds, then this means, informally, that the family $\cF$ can somehow be characterized by relatively simple conditions. For example, a family $\cF$ of graphs that can be characterized as consisting of all graphs having no subgraph from $\cal C$, where $\cal C$ is some specified finite set of graphs, is obviously decidable in constant time. However, the question of whether a family of graphs can be characterized as above is often non-trivial. For example, characterizing cographs as precisely the graphs with no induced $P_4$, attributed to Seinsche~\cite{Seinsche74}, is not easy, and requires nontrivial usage of modular decomposition.

We are now ready to define one of our main subjects of interest, the classes $\LD$ and $\LD^*$. Specifically, $\LD$ (for \emph{local decision}) is the class of all distributed languages that can be decided by a  distributed algorithm that runs in a constant number of  rounds in the $\local$ model~\cite{FKP11}. Similarly, $\LD^*$, the anonymous version of $\LD$, is the class of all distributed languages that can be decided by a  distributed algorithm that runs in a constant number of rounds in the anonymous $\local$ model. By definition, $\LD^*\subseteq \LD$. We conjecture that $$\LD^*=\LD.$$ In this paper, we provide several evidences supporting this conjecture. In addition, we investigate the \emph{non-deterministic} version of these classes, and prove that they coincide. More specifically, a distributed {\em verification} algorithm is a distributed algorithm $A$ that gets as input, in addition to a configuration $(G,\inp)$, a global {\em certificate vector} $\certif$, i.e., every node $v$ of a graph $G$ gets as input two binary strings, an input $\inp(v)\in\{0,1\}^*$ and a certificate $\certif(v)\in\{0,1\}^*$. A verification algorithm $A$ verifies $\cL$ if and only if for every input configuration $(G,\inp)$, the following hold: 
\begin{itemize}
\item if $(G,\inp)\in \cL$, then there exists a certificate $\certif$ such that every node outputs ``yes''; 
\item if $(G,\inp)\notin \cL$, then for every certificate $\certif$, at least one node outputs ``no''.
\end{itemize}
Again, in the (non-anonymous) $\local$ model, these two rules must be satisfied for every identity assignment, but the certificates must be the same regardless of the identities.
We now recall the class $\NLD$, for \emph{non-deterministic local decision}, as defined in \cite{FKP11}: it is the class of all distributed languages that can be verified in a constant number of rounds in the $\local$ model. Similarly, we define $\NLD^*$, the anonymous version of $\NLD$, as the class of all distributed languages that can be verified in a constant number of rounds in the anonymous $\local$ model. By definition, $\NLD^*\subseteq \NLD$. 

\subsection{Our Results}

In this paper, we give several evidences supporting the conjecture $\LD^*=\LD$. In particular, we show that it holds for languages defined on paths, with a finite set of input values. More generally, we show that the conjecture holds for \emph{hereditary} languages, that is, languages closed under node deletion. Regarding arbitrary languages, and arbitrary graphs, we prove that the conjecture holds assuming that every node knows an  upper bound on the total number of nodes in the input graph. (This upper bound can be arbitrary, and may not be the same for all nodes). 

Moreover, we prove that equality between non-anonymous decision and anonymous decision holds in the context of \emph{non-deterministic} local decision, where nodes are given certificates (independent of the identities, if they exist), and the decision consists in verifying these certificates. More precisely, we prove that $\NLD^*=\NLD$. This latter result is obtained by characterizing both $\NLD$ and $\NLD^*$. 

\subsection{Related Work}
The question of how to locally decide (or verify)  languages has 
received quite a lot of attention recently. 
Inspired by classical computation complexity theory,
it was suggested in \cite{FKP11} that the study of decision problems may lead to new
structural insights also in the more complex distributed computing setting.
Indeed, following that paper, which focused on the $\cal{LOCAL}$ model, efforts were made to form a fundamental
computational complexity theory for distributed decision problems in
various other aspects of distributed computing
\cite{FKP11,FP12,FRT11,FRT12}.

The classes LD, NLD and BPLD defined in \cite{FKP11} are
the distributed analogues of the classes  P, NP and BPP, respectively.
The paper provides structural results,
developing a notion of local reduction and establishing completeness results.
One of the main results is the existence of a sharp threshold for randomization,
above which randomization does not help (at least for hereditary languages). 
More precisely, the BPLD classes were classified into two:
below and above the randomization threshold. 
In \cite{FKPP12}, the authors show that the hereditary assumption can be lifted if we restrict our attention
to languages on path topologies. 
These two results from \cite{FKP11,FKPP12}  are used in the current paper in a rather surprising manner. 
The authors in \cite{FKPP12} then 
``zoom''
into the spectrum of classes below the randomization threshold, and
defines a hierarchy of an infinite set of BPLD classes, each of which
is separated from the class above it in the hierarchy.

The precise knowledge of the number of nodes $n$ was shown in  \cite{FKP11} to be of large impact on non-deterministic decision. Indeed, with such a knowledge
every language can be decided non-deterministically in the model of NLD. We note, however, that the knowledge of an arbitrary upper bound on $n$ (as assumed here in one of our results) seems to be a much weaker assumption, and, in particular, will not suffice for non-deterministically deciding all languages. In the context of construction problems, it was shown in \cite{KSV11} that in many case, the knowledge of $n$ (or an upper bound on $n$) is not essential.

The original theoretical basis for non-determinism in local
computation was laid by the theory of \emph{proof-labeling schemes} (PLS)~\cite{GS11,KK07,KKM11,KKP10} 
originally defined in \cite{KKP10}. As mentioned, 
this notion resembles the notion of NLD, but differs in the role identities play. Specifically, in PLS the designer of the algorithm may base the certificates'
(called labels in the terminology of PLS) construction on the given identity assignment. In contrast, in the model of NLD, the certificates must be the same regardless of the identities of nodes.
Indeed, this difference is significant: while every language can be verified by a proof labeling scheme, not every language belongs to NLD~\cite{FKP11}. These notions also bear some similarities to the notions of \emph{local computation with advice} \cite{DP12,FGIP07,FIP10,FKL07}, 
{\em local detection}~\cite{AKY97},
{\em local checking}~\cite{APV}, or {\em silent stabilization}~\cite{silent}.
In addition, as shown later on, the notion of NLD is related also to the theory of
{\em lifts} or {\em covers}~\cite{A80,Linial01}.

Finally, the classification of decision problems in distributed
computing has been studied in several other models. For example,
\cite{DHKKNPPW} and \cite{KKP11} study specific decision problems in
the $\cal{CONGEST}$ model. In \cite{KKM11}, the authors study MST verification in the PLS sense but under the  $\cal{CONGEST}$ model of communication. 
In addition, decision problems have been studied in the asynchrony discipline
too, specifically in the framework of {\em wait-free computation}
\cite{FRT11,FRT12} and {\em mobile agents  computing} \cite{FP12}.
In the wait-free model, the main issues are not spatial constraints but timing
constraints (asynchronism and faults). The main focus of  \cite{FRT12} is
deterministic protocols aiming at studying the power of the ``decoder'',
i.e., the interpretation of the results. While this paper essentially
considers the AND-checker (since a global ``yes'' corresponds to all processes
saying ``yes''), \cite{FRT12} deals with other interpretations,
including more values (not only ``yes'' and ``no''), with the objective of
designing checkers that use the smallest number of values.

\section{Deterministic Decision}
\label{sec:dd}

We conjecture that $\LD=\LD^*$. A support to this conjecture is that it holds for a large class of languages, namely for all \emph{hereditary} languages, that is languages closed under node deletion. For instance, $\coloring$ and $\mis$ are hereditary, as well as all languages corresponding to hereditary graph families, such as planar graphs, interval graphs, forests, chordal graphs, cographs, perfect graphs, etc. 

\begin{theorem}\label{theo:LDhereditary}
$\LD^*=\LD$ for hereditary languages. 
\end{theorem}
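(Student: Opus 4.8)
Since $\LD^*\subseteq\LD$ holds by definition, it suffices to prove $\LD\subseteq\LD^*$ for hereditary languages. So fix a hereditary language $\cL\in\LD$, decided by a deterministic algorithm $A$ that runs in $t=O(1)$ rounds in the $\local$ model, correctly for \emph{every} identity assignment. The naive way to remove identities would be to let a node, seeing only its anonymous ball $B(u,t)$, output ``yes'' iff $A$ accepts at $u$ under \emph{every} labeling of the ball by distinct identities. This rule is in fact \emph{correct}: on a legal instance each local labeling extends to a global one, forcing ``yes'' everywhere, while on an illegal instance the node that rejects under some global labeling also rejects under the corresponding local labeling. The difficulty is that it quantifies over an infinite set of labelings and hence need not constitute a legitimate anonymous decision procedure (the node would have to certify acceptance over infinitely many labelings). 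My plan is therefore to reach $\LD^*$ indirectly, passing through randomization and invoking the randomization-threshold theorem for hereditary languages of \cite{FKP11} --- this is the ``surprising'' use of that result announced in the introduction.

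First I would build a \emph{randomized} anonymous algorithm $R$ for $\cL$. Each node draws an independent random identity from a continuous source (equivalently, reveals unboundedly many random bits on demand), so that on any fixed finite configuration the drawn identities are pairwise distinct with probability $1$. Each node then collects its anonymous ball, attaches the locally sampled identities, and runs the individual computation of $A$. Conditioned on the (probability-$1$) event that all identities are distinct, the global output of $R$ on $(G,\inp)$ equals the output of $A$ on the labeled configuration $(G,\inp,\id)$ for a bona fide identity assignment $\id$; since $A$ decides $\cL$ for every such assignment, $R$ outputs ``yes'' everywhere when $(G,\inp)\in\cL$ and makes at least one node output ``no'' when $(G,\inp)\notin\cL$. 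Hence $R$ decides $\cL$ in the randomized anonymous model with completeness and soundness probabilities $p=q=1$.

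Second, I would derandomize $R$. The randomization-threshold theorem of \cite{FKP11} states that a \emph{hereditary} language decided by a randomized algorithm whose success probabilities satisfy $p^2+q>1$ is already decidable deterministically; our probabilities $p=q=1$ satisfy $1+1>1$, comfortably above the threshold. The crucial point I would verify is that this derandomization is \emph{identity-oblivious}: it reconstructs a deterministic decision by combining, at each node, the possible local outcomes over its ball and exploiting closure under node deletion, and it never consults the actual identities. Consequently the construction applies verbatim in the anonymous model and turns $R$ into a \emph{deterministic} anonymous algorithm for $\cL$, witnessing $\cL\in\LD^*$. Together with $\LD^*\subseteq\LD$ this yields the claimed equality.

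The main obstacle, and the step deserving the most care, is exactly this transfer of the threshold theorem to the anonymous setting: fed to \cite{FKP11} as is, $R$ only produces a member of $\LD$, so one must re-examine the proof to confirm that neither the reduction nor the derandomization relies on identities, so that it genuinely outputs a member of $\LD^*$. The hereditary hypothesis enters \emph{only} here --- it is precisely the ingredient the threshold theorem requires --- which explains why the argument is confined to hereditary languages. A secondary technical point is the distinctness of the random identities: reaching probability $1$ presupposes an idealized continuous (or unbounded) source of randomness, since with finitely many random bits per node the distinctness probability degrades with $n$; one must therefore either adopt the idealized model or argue a uniform lower bound keeping the success probabilities above the threshold.
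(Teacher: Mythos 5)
Your overall strategy --- push the deterministic non-anonymous algorithm through the randomization-threshold theorem for hereditary languages of \cite{FKP11}, and then observe that the deterministic decider produced by that theorem never consults identities --- is exactly the paper's proof. But your Step~1, the construction of an \emph{anonymous randomized} $(1,1)$-decider $R$ via randomly sampled identities, is both unnecessary and genuinely broken, and this is where your write-up has a gap. In the standard model nodes flip discrete coins; since nodes know neither $n$ nor any bound on it, no fixed number of random bits per node keeps the probability of a global identity collision uniformly small, so $R$ is not a $(p,q)$-decider for \emph{any} constants $p,q$ with $p^2+q>1$: the threshold theorem is quantified over all instances with fixed constants, and your success probabilities degrade with $n$. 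The idealized fix you propose (a continuous or unbounded source, giving distinctness with probability $1$) does not repair this: the sampled ``identities'' are then not finite binary strings, hence not legal identity assignments, and the behavior of $A$ on them is simply undefined; moreover, the theorem of \cite{FKP11} is stated for standard randomized algorithms, so invoking it for such an idealized $R$ is itself unjustified.

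The repair is to delete Step~1 entirely, which is what the paper does. A deterministic algorithm \emph{is} a randomized algorithm that uses no coins: $A$ itself, run in the ordinary (non-anonymous) $\local$ model, is already a $(1,1)$-decider for $\cL$, and $1^2+1>1$, so Theorem~3.3 of \cite{FKP11} applies to $A$ directly. There is no need for the \emph{input} of the derandomization to be anonymous; what must be anonymous is the \emph{output}, and that is the one point you correctly identified in your final paragraph: the deterministic algorithm $D$ constructed in that proof collects the ball $B(u,t)$ for some constant $t$ and accepts at $u$ if and only if $B(u,t)\in\cL$, regardless of identities, so $D$ witnesses $\cL\in\LD^*$. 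In short, your last paragraph contains the paper's whole argument, while the random-identity detour preceding it adds nothing and cannot be carried out in the model as stated.
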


To prove the theorem, it is sufficient to show that $\LD\subseteq \LD^*$ for hereditary languages. This immediately follows from the statement and proof of Theorem~3.3 in~\cite{FKP11}. Indeed, let $A$ be a non-anonymous local algorithm deciding $\cL$. This deterministic algorithm is in particular a randomized algorithm, with success probabilities $p=1$ for legal instances, and $q=1$ for illegal instance. That is, algorithm $A$ is a $(1,1)$-decider for $\cL$, according to the  definition in~\cite{FKP11}. Since $\cL$ is hereditary, and since $p^2+q>1$, the existence of $A$ implies the existence of a specific deterministic anonymous local algorithm $D$ for $\cL$. Indeed, the algorithm $D$ described in the proof of Theorem~3.3 in~\cite{FKP11}  is in fact anonymous: it simply collects the ball $B(u,t)$ of radius $t$ around each node $u$ for some constant~$t$, and $u$ then decides ``yes'' or ``no'' according to whether $B(u,t)\in\cL$ or not, regardless of the identities. 

A similar proof, based on Theorem~4.1 in~\cite{FKPP12}, enables to establish the following: 

\begin{theorem}\label{theo:LDpath}
$\LD^*=\LD$ for languages defined on the set of paths, with a finite set of input values. 
\end{theorem}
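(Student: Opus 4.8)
The plan is to mirror the argument used for Theorem~\ref{theo:LDhereditary}, replacing the input from~\cite{FKP11} by the corresponding result for paths from~\cite{FKPP12}. Since $\LD^*\subseteq\LD$ holds by definition, it suffices to establish $\LD\subseteq\LD^*$ for languages $\cL$ defined on the set of paths with a finite set of input values. So fix such a language $\cL\in\LD$, and let $A$ be a non-anonymous local algorithm that decides $\cL$ in some constant number~$t$ of rounds.

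The first step is to reinterpret the deterministic decider $A$ as a randomized one. Just as in the proof of Theorem~\ref{theo:LDhereditary}, $A$ is trivially a $(p,q)$-decider with $p=q=1$, and hence $p^2+q=2>1$, placing $A$ strictly above the randomization threshold. The key tool is then Theorem~4.1 in~\cite{FKPP12}. Whereas Theorem~3.3 in~\cite{FKP11} requires $\cL$ to be hereditary in order to derandomize a $(p,q)$-decider with $p^2+q>1$, the result of~\cite{FKPP12} lifts the hereditary hypothesis at the price of restricting to path topologies with a finite input alphabet. Invoking it on $A$ therefore yields a deterministic local algorithm $D$ for $\cL$ running in constant time.

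It remains to argue that $D$ can be taken to be anonymous, which is exactly the content of $\cL\in\LD^*$. As in the hereditary case, the derandomized algorithm produced has the canonical form of collecting the ball $B(u,t')$ of some constant radius $t'$ around each node $u$ and outputting ``yes'' or ``no'' according to a fixed predicate evaluated on the (anonymous) contents of that ball, without any reference to the identities of the nodes it contains. Such an algorithm is legal in the anonymous $\local$ model, which gives $\cL\in\LD^*$ and completes the inclusion.

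The step I expect to be the main obstacle is precisely the last one: verifying that the decider delivered by Theorem~4.1 of~\cite{FKPP12} is genuinely identity-oblivious. In the hereditary setting the anonymity of $D$ is immediate, since its decision rule is the membership test ``$B(u,t)\in\cL$?''. On paths, however, the derandomization relies on finer combinatorial arguments exploiting the finiteness of the input alphabet (pigeonhole/Ramsey-type repetitions of input patterns along the path), so one must check that the resulting decision predicate depends only on the anonymous ball and never on the identity assignment. The finiteness of the input set is essential here, as it is what allows the path argument to dispense with the hereditary closure while still producing a decider of this ball-membership form.
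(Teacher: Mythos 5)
Your proposal follows exactly the paper's intended argument: the paper proves this theorem in one line, stating that the proof of Theorem~\ref{theo:LDhereditary} carries over verbatim with Theorem~3.3 of~\cite{FKP11} replaced by Theorem~4.1 of~\cite{FKPP12}, which is precisely your reduction (deterministic decider viewed as a $(1,1)$-decider above the $p^2+q>1$ threshold, derandomized via the path result, with anonymity of the resulting ball-based decider checked at the end). Your identification of the anonymity of the derandomized algorithm as the one point requiring inspection of~\cite{FKPP12} is exactly the step the paper itself glosses over, so the proposal is correct and essentially identical to the paper's proof.
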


Another evidence supporting the conjecture $\LD=\LD^*$ is that it holds assuming that nodes have access to a seemingly weak oracle. Specifically, this oracle, denoted ${\up}$, simply provides each node with an arbitrarily large upper bound on the total number of nodes in the actual instance. 
(It is not assumed that all the upper bounds provided to nodes are the same). We denote by $\LD^{*\up}$ the class of languages that can be decided by an anonymous local algorithm having access to oracle $\up$, and we prove the following: 

\begin{theorem}\label{theo:LD=LD}
$\LD^*\subseteq\LD\subseteq\LD^{*\up}$.
\end{theorem}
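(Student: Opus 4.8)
The inclusion $\LD^*\subseteq\LD$ is immediate, since any anonymous algorithm can be run in the $\local$ model by simply discarding the identities. The content of the statement is therefore the second inclusion $\LD\subseteq\LD^{*\up}$, and this is the part I would focus on.

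The plan is as follows. Let $\cL\in\LD$, and let $A$ be a local algorithm deciding $\cL$ in $t=O(1)$ rounds for every identity assignment. I would design an anonymous algorithm $D$, equipped with oracle $\up$, that decides $\cL$. At each node $u$, algorithm $D$ first takes a snapshot of the ball $B(u,t)$ together with the inputs $\inp(v)$ of its nodes (cost $t$), and queries $\up$ to obtain an upper bound $N$ on the number $n$ of nodes, with $N\ge n$. Let $k$ be the number of nodes of $B(u,t)$; note $k\le n\le N$. Node $u$ then enumerates all injective maps $\sigma$ from the nodes of $B(u,t)$ into $\{1,\dots,N\}$, and for each such $\sigma$ simulates the individual computation performed by $A$ on the view $(B(u,t),\inp,\sigma)$. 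Finally, $D$ outputs ``yes'' at $u$ if and only if $A$ outputs ``yes'' for \emph{every} such $\sigma$. Since $N$ is finite there are finitely many maps $\sigma$, so this computation terminates; and since $D$ performs a single snapshot of radius $t$ followed by local computation, it runs in constant time in the anonymous model with oracle $\up$.

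To prove correctness I would argue the two decision rules separately. For the legal case, suppose $(G,\inp)\in\cL$. Every $\sigma$ tested at $u$ uses $k$ distinct values and hence extends to a globally injective assignment $\hat\sigma$ of $G$ (assign to the remaining $n-k$ nodes arbitrary distinct values, which is possible since identities are only required to be pairwise distinct). Because $A$ decides $\cL$ for every identity assignment, $A$ outputs ``yes'' at every node under $\hat\sigma$; as the output of $A$ at $u$ depends only on $(B(u,t),\inp,\hat\sigma)$ and $\hat\sigma$ restricts to $\sigma$ on $B(u,t)$, the simulated run on $\sigma$ also returns ``yes''. Thus every tested $\sigma$ yields ``yes'', so $D$ outputs ``yes'' at every node. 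For the illegal case, suppose $(G,\inp)\notin\cL$. Fix any global identity assignment $\id$ of $G$ whose values lie in $\{1,\dots,n\}\subseteq\{1,\dots,N\}$. By correctness of $A$, some node $v$ outputs ``no'' under $\id$. The restriction of $\id$ to $B(v,t)$ is an injective map into $\{1,\dots,N\}$, hence one of the maps enumerated at $v$, and on it the simulation returns ``no''. Therefore $D$ outputs ``no'' at $v$, so at least one node outputs ``no''. This shows $D$ decides $\cL$, whence $\cL\in\LD^{*\up}$.

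The heart of the argument, and the only place where the oracle $\up$ is really used, is the passage from ``$A$ is correct for \emph{all} identity assignments'' to a purely local, anonymous universal quantification over the identity patterns on $B(u,t)$. The delicate point I would be most careful about is the choice of identity range: on the one hand it must be large enough ($N\ge n$, supplied by $\up$) that a genuine global ``no''-witness assignment appears as a local pattern at some node, which secures soundness; on the other hand every tested local pattern must be \emph{realizable}, i.e.\ extendable to a valid global assignment, since otherwise a spurious ``no'' could destroy completeness on a legal instance. Bounding the range by the finite value $N$ simultaneously makes the enumeration effective and keeps every tested pattern realizable, which is precisely why an \emph{upper bound} on $n$, rather than its exact value, suffices.
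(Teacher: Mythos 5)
Your proposal is correct and follows essentially the same route as the paper's proof: the same anonymous algorithm (enumerate all identity assignments on the ball $B(u,t)$ within the range given by the oracle $\up$, simulate $A$ on each, and reject iff some simulation rejects), and the same two-sided correctness argument (completeness via extendability of every local identity pattern to a global assignment, soundness via the fact that a global rejecting assignment in $[1,n]\subseteq[1,N]$ shows up as one of the enumerated local patterns). Your closing remark about why an arbitrary upper bound suffices matches the paper's intent exactly, so there is nothing to add.
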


\begin{proof}
We just need to prove that $\LD\subseteq\LD^{*\up}$. Let $\cL\in\LD$, and let $A$ be a local (non-anonymous) algorithm deciding $\cL$. Assume that the running time of $A$ is $t$. We transform $A$ into an anonymous algorithm $A'$ deciding $\cL$ in time $t$,  assuming each node $u$ in a given input $G$ has an access to the oracle $\up$, i.e., it knows an arbitrary upper bound $n_u$ on the number of nodes in $G$. 
Algorithm $A'$ works as follows. Each node $u$ collects the ball $B(u,t)$ of radius $t$ around it. Then, for every possible assignment of identities to the nodes of $B(u,t)$  taken from the range $[1,n_u]$, node $u$ simulates the behavior of the non-anonymous algorithm $A$ on the ball $B(u,t)$ with the corresponding identities. If, in one of these simulations, algorithm $A$ decides ``no'', then $A'$ decides ``no''. Otherwise, $A'$ decides ``yes''. 

We now prove the correctness of $A'$. If the input $(G,\inp)\in\cL$, then $A$ accepts it for every identity assignment to the nodes of $G$. Therefore, since, for every node $u$, every possible identity assignment to the nodes of the ball $B(u,t)$ can be extended to an  identity assignment to all the nodes of $G$, all the simulations of $A$ by $u$ return ``yes'', and hence $A'$ accepts $\cL$ as well. On the other hand, if  $(G,\inp)\notin\cL$ then $A$ rejects it for every identity assignment to the nodes of $G$. That is, for every identity assignment to the nodes of $G$, at least one node $u$ decides ``no''. (Note that, this node $u$ may be different for two different identity assignments). Let us fix one identity assignment $\id$ to the nodes of $G$, in the range $[1,n]$, and let $u$ be one node that decides ``no'' for $\id$. Let $B_{\mbox{\footnotesize $\id$}}(u,t)$ be the ball $B(u,t)$ with the identities of the nodes given by $\id$. In $A'$, since $u$ tries all possible identity assignments of the ball $B(u,t)$ in the range $[1,n_u]$ with $n\leq n_u$, in one of its simulations of $A$, node $u$ will simulate $A$ on $B_{\mbox{\footnotesize $\id$}}(u,t)$. In this simulation, node $u$ decides ``no'', and hence algorithm $A'$ rejects $\cL$ as well. 
\qed
\end{proof}

Note that the inclusion $\LD\subseteq\LD^{*\up}$ holds when one imposes no restrictions on the individual sequential running time. However, the transformation of a (non-anonymous) local algorithm into an anonymous local algorithm as described in the proof of Theorem~\ref{theo:LD=LD} is very expensive in terms of individual computation. Indeed, the number of simulations of the original local algorithm $A$ by each node $u$ can be as large as $n_u \choose n_B$ where $n_u$ is the upper bound on $n$ given by the oracle $\up$, and $n_B$ is the number of nodes in the ball $B(u,t)$. This bound can be exponential in $n$ even if the oracle provides a good approximation of $n$ (even if it gives precisely $n$). It would be nice to establish $\LD\subseteq\LD^{*\up}$ by using a transformation not involving a huge increase in the individual sequential computation time. 

\section{Non-deterministic Decision}
\label{sec:ndd}

In the previous section, we have seen several evidences supporting the conjecture that $\LD^*=\LD$, but whether it holds or not remains to be proved. In this section, we turn our attention to the non-deterministic variants of these two classes, and show that they coincide. More formally, we have:  

\begin{theorem}\label{theo:NLD=NLD}
$\NLD^*=\NLD$.
\end{theorem}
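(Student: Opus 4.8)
The plan is to prove the two inclusions $\NLD^* \subseteq \NLD$ and $\NLD \subseteq \NLD^*$ separately, with the latter being the substantive direction. The inclusion $\NLD^*\subseteq\NLD$ is immediate from the definitions, since the anonymous model is no stronger than the standard model: any anonymous verification algorithm can be run in the $\local$ model by simply ignoring the identities (the certificates in the anonymous case do not depend on identities anyway, so they transfer verbatim). The real content is to show that any language verifiable with identities can also be verified anonymously. The paper announces that this result is obtained ``by characterizing both $\NLD$ and $\NLD^*$'', so I would aim to prove a structural characterization of each class and then observe that the two characterizations coincide.

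For the characterization, I would exploit the connection to \emph{covers} (or \emph{lifts}) hinted at in the related-work discussion. The key observation is that a constant-time verification algorithm cannot distinguish a configuration from one of its lifts: if two configurations have the same collection of radius-$t$ balls (the same ``local views''), then no $t$-round algorithm can tell them apart. I would therefore try to characterize $\NLD$ (respectively $\NLD^*$) as the class of languages $\cL$ that are, in a suitable sense, closed under the operation of taking configurations with identical local view-sets — i.e.\ $\cL$ is closed under indistinguishability by local views together with the projection of certificates. The idea is that the certificate $\certif$ can encode, at each node, enough local topological and proof information so that each node can verify its immediate neighborhood is consistent; correctness then hinges on whether illegal instances can always be caught, which is exactly where lifts matter.

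The main technical step is to simulate the identity-based certification scheme anonymously. Given a language $\cL\in\NLD$ with verifier $A$ and certificate $\certif$ (which by definition does not depend on identities), I would let the anonymous certificate carry the same information $\certif(v)$ plus possibly a locally consistent encoding that lets each node reconstruct enough of its identified ball to run $A$. The subtlety, and the expected main obstacle, is the same symmetry problem that motivates the whole paper: in an anonymous network a node cannot reconstruct genuine distinct identities for its ball, and a dishonest certificate might try to fool the verifier by presenting a ``locally legal but globally impossible'' configuration, precisely the situation exploited by lifts. I expect to resolve this by arguing that the existence of a bad anonymous certificate for an illegal instance would, via the covering structure, yield a bad identity assignment (or a valid certificate on a legal lift) for the original $\NLD$ verifier, contradicting its correctness.

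Concretely, I would (i) show that any $\cL\in\NLD$ is closed under the relevant lift/indistinguishability relation, using the fact that the $\NLD$ certificates are identity-independent so they descend and lift along covers; (ii) show that this closure property is exactly what is needed to build an anonymous verifier, by having the prover supply a certificate encoding a consistent ``virtual'' identified structure and having each node check local consistency of that structure against its anonymous ball; and (iii) verify the two decision rules, where the yes-case uses that a legal instance lifts the honest certificate and the no-case uses that any anonymous certificate for an illegal instance induces, through the cover, a certificate for a configuration the $\NLD$ verifier must reject, forcing some node to say ``no''. The delicate point throughout will be ensuring that ``local consistency of the virtual structure'' is itself checkable in constant rounds and that it genuinely rules out the problematic lifts; this is where I would invest the bulk of the argument.
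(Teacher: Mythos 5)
Your proposal follows essentially the same route as the paper's proof: the trivial inclusion $\NLD^*\subseteq\NLD$, plus the substantive direction established by (i) showing every language in $\NLD$ is closed under lifts ($t$-local isomorphisms), using the identity-independence of certificates to pull a valid certificate back along the cover and then copying identities across corresponding balls to reach a contradiction, and (ii) showing lift-closure suffices for $\NLD^*$-membership via a certificate encoding an identified copy of the instance whose local consistency each node checks against its ball. One caveat to keep in mind when fleshing out the details: the paper's certificate is a \emph{global} map of the entire instance (so the anonymous verifier needs no simulation of the original verifier, only sequential decidability of $\cL$), and local consistency can never ``rule out'' the problematic lifts --- covers are locally indistinguishable by nature; the resolution, which your step (iii) correctly contains, is that lift-closure makes them harmless.
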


\begin{proof}
To prove $\NLD^*=\NLD$, it is sufficient to prove $\NLD\subseteq\NLD^* $. To establish this inclusion, we provide a sufficient condition for $\NLD^*$-membership, and prove that it is a necessary condition for  $\NLD$-membership. 

Let $I=(G,\inp)$ and $I'=(G',\inp')$ be two input instances. A \emph{homomorphism} from $I$ to $I'$ is a function $f:V(G)\to V(G')$ that preserves the edges of $G$ as well as the inputs to the nodes. Specifically, 
\[
\{u,v\}\in E(G) \Rightarrow \{f(u),f(v)\}\in E(G'),\] 
and $f$ maps every node $u\in V(G)$ to a node $f(u)\in V(G')$ satisfying $$\inp'(f(u))=\inp(u).$$ For instance, assuming the nodes have no inputs, and labeling the nodes of the $n$-node cycle $C_n$ by consecutive integers from 0 to $n-1$, modulo~$n$, then the map $f:C_8\to C_4$ defined by $f(u)=u\bmod 4$ is a homomorphism. The trivial map $g:C_8\to K_2$ defined by $g(u)=u\bmod 2$, where $K_2$ is the 2-node clique, is also a homomorphism. To establish conditions for $\NLD$- and $\NLD^*$-membership, we require the involved homomorphisms to preserve the local neighborhood of a node, and define the notion of \emph{$t$-local isomorphism}.  

Let $t$ be a positive integer. We say that $I$ is $t$-local isomorphic to $I'$ if and only if there exists an homomorphism $f$ from $I$ to $I'$ such that,  for every node $v\in V(G)$, $f$ restricted to $B_G(v,t)$ is an isomorphism from  $B_G(v,t)$ to $B_{G'}(f(v),t)$. We call such a homomorphism $f$ a $t$-local isomorphism. 

Note that a homomorphism is not necessarily a 1-local isomorphism. For instance, the aforementioned map $f:C_8\to C_4$ defined by $f(u)=u\bmod 4$ is a 1-local isomorphism, but the map $g:C_8\to K_2$ defined by $g(u)=u\bmod 2$ is not a 1-local isomorphism. 
To be a 1-local isomorphism, a homomorphism should also insure isomorphism between the balls of radius~1. Also observe that any $t$-local isomorphism $f: G \to G'$ is onto (because if a node of $G'$ has no pre-image, then neither do its neighbors have a pre-image, since  homomorphisms preserve edges, and so forth). To avoid confusion, it is thus useful to keep in mind that, informally, a $t$-local isomorphism goes from a ``larger'' graph to a ``smaller'' graph. 

\begin{definition}
\label{def:lift}
For positive integer $t$, we say that $\cL$ is \emph{$t$-closed under lift }if, for every two instances $I,I'$ such that $I$ is $t$-local isomorphic to $I'$, we have: $$I' \in\cL \Rightarrow I \in \cL.$$
\end{definition}

So, informally, Defintion~\ref{def:lift} states that, for a language $\cL$ to be $t$-closed under lift, if a ``smaller''  instance $I'$ is in $\cL$ then any ``larger'' instance $I$ that is a lift of $I'$, i.e., satisfying that $I$ is $t$-local isomorphic to $I'$, must also be in $\cL$. The following lemma gives a sufficient condition for $\NLD^*$-membership. 

\begin{lemma}\label{claim:sufficient} 
Let $\cL$ be a language. If there exists $t \geq 1$ such that $\cL$ is $t$-closed under lift, then $\cL\in\NLD^*$. 
\end{lemma} 

\begin{proofofclaim}
Let $\cL$ be a language, and assume that there exists $t \geq 1$ such that $\cL$ is $t$-closed under lift. We describe an anonymous non-deterministic local algorithm $A$ deciding $\cL$,  and performing in $t$ rounds. The certificate of each node $v$ is a triple $\certif(v)=(i,G',\inp')$ where $G'$ is an $n$-node graph with nodes labeled by distinct integers in $[1,n]=\{1,\dots,n\}$, $i\in[1,n]$, and $\inp'$ is an $n$-dimensional vector.
Informally, the certificates are interpreted by $A$ as follows. The graph $G'$ is supposed to be a ``map'' of $G$, that is, $G'$ is interpreted as an isomorphic copy of $G$. The integer $i$ is the label of the node in $G'$ corresponding to node $v$ in $G$. Finally, $\inp'$ is interpreted as the input of the nodes in $G'$. 

The algorithm $A$ performs as follows. Every node $v$ gets $B_G(v,t)$, the ball of radius $t$ around it; hence, in particular, it collects all the certificates of all the nodes at distance at most $t$ from it. Then, by comparing its own certificate with the ones of its neighbors,  it checks that the graph $G'$, and the input $\inp'$ in its certificate, are identical to the ones in the certificates of its neighbors. It also verifies consistency between the labels and the nodes in its ball of radius $t$. That is, it checks whether the labels and inputs in the certificate of the nodes in $B_G(v,t)$ are as described by its  certificate. Whenever a node fails to pass any of these tests, it outputs ``no''. Otherwise it output ``yes'' or ``no'' according to whether $(G',\inp')\in\cL$ or not, respectively. (This is doable because we are considering languages that are decidable in the usual sense of sequential computation). 

We show that $A$ performs correctly. If $(G,\inp)\in\cL$, then by labeling the nodes in $G$ by distinct integers from~1 to $|V(G)|$, and by providing the node $v$ labeled $i$ with $\certif(v)=(i,G,\inp)$, the algorithm $A$ output ``yes'' at all nodes, as desired. Consider now a instance $I=(G,\inp)\notin\cL$. Assume, for the purpose of contradiction that there exists a certificate $\certif$ leading all nodes to output ``yes''. Let $f:V(G) \to V(G')$ be defined by $f(v)=i$ where $i$ is the label of $v$ in its certificate. Since $\certif$ passes all tests of $A$, it means that (1) $\certif(v)=(i,G',\inp')$ where the instance $I'=(G',\inp')$ is the same for all nodes, (2) $f$ restricted to $B_G(v,t)$ is an isomorphism from  $B_G(v,t)$ to $B_{G'}(f(v),t)$, for every node $v$, and (3)  $(G',\inp')\in\cL$. In view of (2), $I$ is $t$-local isomorphic to $I'$. Therefore, (3) implies that $I=(G,\inp)\in\cL$, because $\cL$ is $t$-closed under lift. This is in contradiction with the actual hypothesis $(G,\inp)\notin\cL$.  Thus, for each certificate $\certif$, there must exist at least one node that outputs ``no''.  As a consequence, $A$ is a non-deterministic algorithm deciding $\cL$, and thus $\cL\in\NLD^*$.
\end{proofofclaim}

The following lemma shows that the aforementioned sufficient condition for $\NLD^*$-membership is a necessary condition for $\NLD$-membership. 

\begin{lemma}\label{claim:necessary} 
Let $\cL$ be a language. If $\cL\in\NLD$, then there exists $t \geq 1$ such that $\cL$ is $t$-closed under lift. 
\end{lemma} 

\begin{proofofclaim}
Let $\cL$ be a language in $\NLD$, and let $A$ be a non-deterministic (non-anonymous) local algorithm deciding $\cL$. Assume, for the purpose of contradiction that, for any integer $t \geq 1$, $\cL$ is \emph{not} $t$-closed under lift. That is, for any $t$, there exist two input instances $I,I'$ such that $I$ is $t$-local isomorphic to $I'$, with $I\notin\cL$ and $I'\in\cL$. Assume that $A$ runs in $t$ rounds. Without loss of generality, we can assume that $t\geq 1$.  Let $I=(G,\inp)\notin\cL$ and $I'=(G',\inp')\in\cL$ satisfying $I$ is $t$-local isomorphic to $I'$. Since $I'\in\cL$, there exists a certificate $\certif'$ such that when $A$ is running on $I'$ with certificate $\certif'$, every node output ``yes'' for every identity assignment. Since  $I$ is $t$-local isomorphic to $I'$, there exists an homomorphism $f:I\to I'$ such that, for every node $v\in G$, $f$ restricted to $B_G(v,t)$ is an isomorphism from  $B_G(v,t)$ to $B_{G'}(f(v),t)$. Let $\certif$ be the certificate for $I$ defined by $\certif(v)=\certif'(f(v))$. Consider the execution of $A$ running on $I$ with certificate $\certif$, and some arbitrary  identity assignment~$\id$. 

Since $A$ performs in $t$ rounds, the decision at each node $v$ is taken according to the inputs, certificates, and identities in the ball $B_G(v,t)$, as well as the structure of this ball. By the nature of the homomorphism $f$, and by the definition of certificate $\certif$, the structure, inputs and certificates of the ball $B_G(v,t)$, are identical to  the corresponding structure, inputs and certificates of the ball $B_{G'}(f(v),t)$. Balls may however differ in the identities of their nodes. So,  let $v_0$ be the node in $G$ deciding ``no'' for $(G,\inp)$ with certificate $\certif$. There exists such a node since $I \notin \cL$. Let $v'_0=f(v_0)$, and assign  the same identities to the nodes in $B_{G'}(v'_0,t)$  as their corresponding nodes in $B_{G}(v_0,t)$. Arbitrarily extend this identities to an identity assignment $\id'$ to  the whole graph $G'$. By doing so, the two balls are not only isomorphic, but every node in $B_{G}(v_0,t)$ has the same input, certificate and identity as its image in $B_{G'}(v'_0,t)$. Therefore, the decision taken by $A$ at $v_0\in G$ under $\id$ is the same as its decision at $v'_0\in G'$ under $\id'$. This is in contradiction to the fact that $v_0$  decides ``no'' while $v'_0$ decides ``yes''.  
\end{proofofclaim}

Lemmas~\ref{claim:sufficient} and~\ref{claim:necessary}~together establish the theorem. 
\qed
\end{proof}

The proof of Lemma~\ref{claim:sufficient} also provides an upper bound on the size of the certificates for \emph{graph languages} in $\NLD$, that is, for languages in $\NLD$ with no input. (This includes, e.g., recognition of interval graphs, and recognition of chordal graphs). Indeed, given $\cL\in\NLD$, Algorithm~$A$ in the proof of Lemma~\ref{claim:sufficient} verifies $\cL$ using a certificate at each node which is essentially an isomorphic copy of the input instance $(G,\inp)$, with nodes labeled by consecutive integers in $[1,n]$. If $\cL$ is a graph language, then there is no input $\inp$, and thus the size of the certificates depends only on the size of the graph. More precisely, we have:  

\begin{corollary}\label{cor:mlogn}
Let $\cL\in\NLD$ be a graph language. There exists an algorithm verifying $\cL$ using certificates of size  $O(n^2)$ bits at each node of every $n$-node graph in $\cL$. 
\end{corollary}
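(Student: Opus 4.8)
The plan is to obtain the bound as a direct reading of the certificate format constructed in the proof of Lemma~\ref{claim:sufficient}, specialized to languages with no input. First I would invoke Lemma~\ref{claim:necessary}: since $\cL\in\NLD$, there is some $t\geq 1$ for which $\cL$ is $t$-closed under lift. Lemma~\ref{claim:sufficient} then hands me the explicit anonymous verifier $A$, which runs in $t$ rounds and, at each node $v$, expects a certificate of the form $\certif(v)=(i,G',\inp')$, where $G'$ is an $n$-node graph whose vertices carry distinct labels in $[1,n]$, where $i\in[1,n]$ is the label attached to $v$, and where $\inp'$ is the purported input vector. Because $A$ is an $\NLD^*$ verifier and $\NLD^*\subseteq\NLD$, it also verifies $\cL$ in the (non-anonymous) sense required by the corollary.

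The second step is the observation that drives the whole statement: for a graph language there is nothing to record in $\inp'$, so that coordinate is the empty string at every node and each certificate collapses to the pair $(i,G')$. It then remains only to count bits. The label $i$ lies in $[1,n]$ and costs $O(\log n)$ bits. The labeled graph $G'$ I would encode by its $n\times n$ adjacency matrix over the vertex set $\{1,\dots,n\}$, one bit per ordered pair, for a total of $n^2$ bits; the value of $n$ is then recovered simply as the side length of this matrix, so no separate field for $n$ is needed. Summing, the certificate at every node has size $n^2+O(\log n)=O(n^2)$ bits, which is the claimed bound.

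I do not expect a genuine obstacle, as the corollary is essentially bookkeeping on top of Lemma~\ref{claim:sufficient}. The one point I would be careful about is that a node must be able to parse its certificate unambiguously into the fields $i$ and $G'$; I would handle this by fixing any self-delimiting layout (for instance, the $n^2$ adjacency bits written in a canonical order followed by the $O(\log n)$ bits encoding $i$), which leaves the asymptotic size unchanged. It is also worth noting that $O(n^2)$ is essentially the best this approach can give, since an arbitrary labeled $n$-node graph requires $\Theta(n^2)$ bits to describe in the worst case.
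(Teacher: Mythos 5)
Your proposal is correct and follows essentially the same route as the paper: apply Lemma~\ref{claim:necessary} to get $t$-closure under lift, then reuse the verifier of Lemma~\ref{claim:sufficient}, whose certificate $(i,G',\inp')$ collapses to a labeled copy of the graph plus a label when there are no inputs, giving $n^2+O(\log n)=O(n^2)$ bits. The only additions you make (the explicit remark that an anonymous verifier is in particular a non-anonymous one, and the self-delimiting encoding) are sound bookkeeping that the paper leaves implicit.
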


We now argue that the above bound is tight, that is, we prove the following.

\begin{proposition}
 There exists a graph language $\cL\in\NLD$ such that every algorithm verifying $\cL$ requires certificates of size $\Omega(n^2)$ bits.
\end{proposition}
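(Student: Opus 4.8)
The goal is to exhibit a graph language $\cL\in\NLD$ for which every verification scheme needs certificates of $\Omega(n^2)$ bits. The plan is to use a counting (incompressibility) argument: I would design $\cL$ so that the set of legal $n$-node graphs is large enough that distinguishing them from nearby illegal graphs forces the certificates to carry essentially all the information of the adjacency matrix, which has $\binom{n}{2}=\Theta(n^2)$ bits.

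Concretely, I would pick a language that encodes an arbitrary graph in a locally verifiable way. A natural candidate is a family of graphs built so that each node can, from its radius-$t$ ball together with the certificates, recover and cross-check the global adjacency structure, but where the number of legal instances is $2^{\Omega(n^2)}$. First I would specify $\cL$ and verify $\cL\in\NLD$ by appealing to Lemma~\ref{claim:sufficient}: it suffices to show $\cL$ is $t$-closed under lift for some fixed $t$, which for a suitably rigid construction (one with no nontrivial $t$-local isomorphic ``smaller'' image, or where every lift preimage is again legal) holds essentially by design. Then, for the lower bound, I would fix the verification algorithm $A$ and its certificate size $s(n)$, and count: each legal $n$-node instance $I\in\cL$ comes with at least one accepting certificate vector $\certif$, and the total certificate information across all nodes is at most $n\cdot s(n)$ bits.

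The heart of the argument is an indistinguishability step. I would show that if two distinct legal instances $I_1,I_2$ were accepted using certificate vectors that agree ``locally'' (so that some node sees an identical radius-$t$ view of inputs, certificates, and structure in both), then one could splice these local views together to build an \emph{illegal} instance that is nonetheless accepted by $A$ — contradicting correctness, exactly in the spirit of the gluing used in Lemma~\ref{claim:necessary}. This forces the map from legal instances to their (locally-decomposable) accepting certificates to be injective in a strong enough sense that the number of distinct certificate patterns is at least $|\cL_n|=2^{\Omega(n^2)}$. Comparing $2^{n\cdot s(n)}\ge 2^{\Omega(n^2)}$ then yields $s(n)=\Omega(n)$ — so to reach the claimed $\Omega(n^2)$ I must instead argue that the information distinguishing the instances cannot be spread thinly across the $n$ nodes but must be concentrated, i.e., that a \emph{single} node's certificate must encode $\Omega(n^2)$ bits.

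The main obstacle, and where I expect to spend the most care, is precisely this last point: ruling out the possibility that the $\Theta(n^2)$ bits of adjacency information are distributed evenly over the $n$ certificates (which would only give $\Omega(n)$ per node). To force $\Omega(n^2)$ \emph{per node}, I would engineer $\cL$ so that the local verification is meaningful only if each node independently holds a full description of the global graph — for instance by taking graphs whose structure is locally so homogeneous (e.g.\ regular, with identical small neighborhoods) that a node's radius-$t$ view reveals almost nothing about distant edges, so the only way any node can reject illegal instances is to compare against a globally stored adjacency map in its own certificate. Making this rigidity precise, so that the gluing/indistinguishability contradiction applies to a single node's view and hence lower-bounds one certificate by the full $\binom{n}{2}$, is the crux; the counting and the $\NLD$-membership are then comparatively routine.
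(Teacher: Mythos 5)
There is a genuine gap, and it sits exactly where you say it does: the step from the counting bound $n\cdot s(n)=\Omega(n^2)$ (which only gives $s(n)=\Omega(n)$ per node) to the claimed $\Omega(n^2)$ \emph{per node} is left as an aspiration, not an argument. You propose to force concentration by making the legal graphs locally homogeneous, so that a node's radius-$t$ view is uninformative and rejection must rely on a globally stored map in its own certificate. But this collides with your own plan for $\NLD$-membership: to invoke Lemma~\ref{claim:sufficient} you need $\cL$ to be $t$-closed under lift, and locally homogeneous graphs are precisely the ones that admit many nontrivial lifts. Closure under lift then forces you to admit into $\cL$ exactly the kind of ``spliced'' instances your indistinguishability step wants to exhibit as illegal-but-accepted: if the glued instance is (or is $t$-local isomorphic to) a lift of a legal instance, it is legal, and no contradiction arises. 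So the two halves of your plan --- rigidity/homogeneity for the lower bound, and lift-closure for membership --- pull in opposite directions, and reconciling them is the entire difficulty, not a routine design choice.

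The paper resolves this tension differently, by reduction rather than from scratch. It takes the family $\cF_n$ of $n$-node graphs from the proof-labeling-scheme literature \cite{KKP10b} that is already known to require labels of size $\Omega(n^2)$, and observes that certificates in $\NLD$ are at least as constrained as PLS labels (PLS labels may depend on the identity assignment, certificates may not). The known hard families are not themselves in $\NLD$, so the paper forms the closure under lift explicitly: \textsf{Seed-$\cF_n$}, the graphs admitting a $1$-local isomorphism onto a member of $\cF_n$, which is $1$-closed under lift by construction and hence in $\NLD$ by Lemma~\ref{claim:sufficient}. The key trick --- and the move your proposal is missing --- is that for \emph{prime} $n$, an $n$-node connected graph cannot be a nontrivial lift of a strictly smaller graph (fibers of a covering map have equal size, so the size of the base divides $n$); hence at prime sizes the closure adds nothing, and the $\Omega(n^2)$ PLS lower bound transfers intact to \textsf{Seed-$\cF_n$}. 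If you want to salvage your approach, you would either need to reprove a per-node $\Omega(n^2)$ bound of the KKP10b type from first principles (a substantial construction, not a counting argument), or find your own mechanism playing the role of the primality trick to keep the lift-closure from swallowing your hard instances.
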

\begin{proof}

Recall that  \cite{KKP10} showed that there exists a graph language for which every proof labeling scheme (PLS) requires labels of size $\Omega(n^2)$ bits (the proof of this latter result appears in a detailed version \cite{KKP10b}). 
Still in the context of PLS, \cite{GS11} showed that this lower bound holds for  two {\em natural} graph families: specifically, \cite{GS11}  showed that verifying symmetric graphs requires labels of size $\Omega(n^2)$ bits, and verifying   non-3 colorable graphs requires almost the same size of labels, specifically, $\Omega(n^2/\log n)$ bits. Note that  the certificate size required for verifying a language  in $\NLD$ is at least as large as the minimum label size required for verifying the language via a proof labeling scheme. 
Unfortunately, however, one cannot obtain our claim directly from the aforementioned results since it turns out that neither of the two graph languages (namely, symmetric graphs and non-3 colorable graphs) belongs to $\NLD$.

We therefore employ an indirect approach.
Specifically, consider a graph $G$. We say that $H$ is a {\em seed} of $G$ if there exists a 1-local isomorphism from $G$ to $H$.
Suppose $\cF$ is a family of graphs. Let \textsf{Seed-$\cF$} denote the family of graphs $G$, for which there exists a seed of  $G$ that belongs to $\cF$. 
Then, by definition, \textsf{Seed-$\cF$} is  $1$-closed under lift.  Indeed, assume that there is a 1-local isomorphism  $g$ from $G'$ to $G$, and let $H\in \cF$ be a seed of $G$ that belongs to $\cF$. Then let $f$ be the 1-local isomorphism  from $G$ to $H$. We have that $f\circ g$ is a 1-local isomorphism  from $G'$ to $H$, because, for every $u\in V(G')$, $B_{G'}(u,1)$ is isomorphic to $B_G(g(u),1)$, which in turn is isomorphic to $B_H(f(g(u)),1)$. Thus $H$ is also a seed of $G'$.  \textsf{Seed-$\cF$} is therefore in $\NLD$. Now, in the proof of corollary 2.2 in  \cite{KKP10b}, the authors construct, for every integer  $n$, a family $\cF_n$  of $n$-node graphs that requires proof labels of size $\Omega(n^2)$. Note that  for every prime integer $n'$, a graph $G$ of size $n'$ belongs to  $\cF_{n'}$ if and only if it belongs to 
\textsf{Seed-$\cF_{n'}$}.  Therefore, there exists a graph language, namely,   \textsf{Seed-$\cF_{n}$}, that requires  certificates of size $\Omega(n^2)$ bits (at least for prime $n$'s).
\qed
\end{proof}

\section{Conclusion}

Again, in this paper, we provide some evidences supporting the conjecture $\LD^*=\LD$. For instance, Theorem~\ref{theo:LD=LD} shows that if every node knows any upper bound on the number of nodes $n$, then all languages in $\LD$ can be decided in the anonymous $\local$ model. One interesting remark about the $\local$ model is that it is guaranteed that at least one node has an upper bound on $n$. This is for instance the case of the node with the largest identity. In the anonymous $\local$ model, however, there is no such guarantee. Finding a language whose decision would be based on the fact that one node has an upper bound on $n$ would disprove the conjecture  $\LD^*=\LD$. Nevertheless, it is not clear whether such a problem exists. 

In this paper, we also prove that $\NLD^*=\NLD$, that is, our conjecture holds for the non-deterministic setting. It is worth noticing that \cite{FKP11} proved that there exists an $\NLD$-complete problem under the local one-to-many reduction. It is not clear whether such a problem exists for $\NLD^*$. Indeed, the reduction in the completeness proof of \cite{FKP11} relies on the aforementioned guarantee that, in the $\local$ model,  at least one node has an upper bound on $n$. 


\end{document}